\newtheorem{theorem}{Theorem}
\newtheorem{assumption}[theorem]{Assumption}
\newtheorem{definition}[theorem]{Definition}
\newtheorem{proposition}[theorem]{Proposition}
\newtheorem{corollary}[theorem]{Corollary}
\newcommand\RE{\mathbb{R}}
\title{Dominance analysis of linear complementarity systems}
\author{F. A. Miranda-Villatoro$^{*}$ \and F. Forni$^{*}$ \and R. Sepulchre$^{*}$
\thanks{The research leading to these results has received funding from the European Research Council
under the Advanced ERC Grant Agreement Switchlet n. 670645.}
\thanks{$^{*}$University of Cambridge, Department of Engineering, Trumpington Street, CB2 1PZ,  Cambridge, United Kingdom. Emails:
{\tt\small fam48@cam.ac.uk, f.forni@eng.cam.ac.uk, r.sepulchre@eng.cam.ac.uk}}}
\begin{document}
\maketitle
\thispagestyle{empty}
\pagestyle{empty}
\begin{abstract}
The paper extends the concepts of dominance and $p$-dissipativity to the  non-smooth family of  linear
complementarity systems. Dominance generalizes incremental stability whereas $p$-dissipativity generalizes
incremental passivity. The generalization aims at an interconnection theory for the design and analysis of switching and oscillatory systems.
The approach  is illustrated by a detailed study of classical electrical circuits that switch and oscillate.
\end{abstract}
\section{INTRODUCTION}
Dominance analysis and p-dissipativity were recently introduced in \cite{forni2017, Forni2017b}
to extend the application of dissipativity theory to the analysis of multistable and
oscillatory systems. The approach is {\it differential}, that is, based on the analysis of {\it linearized} dynamics
along trajectories, in the spirit of contraction theory \cite{lohmiller1998}, convergence analysis \cite{pavlov2005}, or differential stability
analysis \cite{forni2014b}. It is particularly adapted to systems whose linearization can be easily parametrized, such
as Lure systems that interconnect linear time-invariant systems with static nonlinearities \cite{Miranda2017b}.
The present paper investigates how to extend this analysis to 
nonlinear circuits modeled as linear complementarity systems:   models that consist of  linear time-invariant systems augmented
with a static complementarity constraint. 
The modeling framework of linear complementarity systems has proven very 
useful to analyze systems whose nonlinear dynamical behavior arises from non-smooth constraints
 \cite{Brogliato2001, Brogliato2003, Camlibel2002,
Camlibel2006, Heemels2003}. They find applications in a number of fields including mechanical 
systems with unilateral constraints \cite{Brogliato2016},  electrical circuits with
diodes \cite{Acary2011}, and mathematical programming \cite{nagurney1993}.
Linear complementarity systems provide an attractive framework for dominance analysis 
because they are general enough to model  switching and  oscillatory behaviors often encountered
in the presence of non-smooth constraints and specific enough to lead to tractable analysis.
In particular, the {\it passivity} property of complementarity constraints has proven central to
analyze linear complementarity systems in the framework of dissipativity 
theory  \cite{Schaft1996,Camlibel2002}. Linear complementarity systems hence offer an ideal
platform for the application of $p$-dissipativity theory to switching and oscillatory behaviors.
To account for the non-smoothness of linear complementarity systems, the {\it differential} analysis
of \cite{forni2017, Forni2017b} has to be replaced by {\it incremental} analysis \cite{stan2007}, \cite{pavlov2008}, \cite{Liu2011}. 
Incremental analysis studies how {\it increments} between trajectories evolve in time, whereas
differential analysis only considers {\it linearized} trajectories, that it, infinitesimal increments.
In the context of linear complementarity systems, the difference is technical rather than conceptual. 
We show that the main results of \cite{forni2017, Forni2017b} extend to the incremental setting imposed
by non-smooth constraints.
This paper is organized as follows. In Section \ref{section:lcs} we briefly review the modeling of 
linear complementarity systems and its core passivity property. Section \ref{section:dominance}
is dedicated to  the property of dominance and $p$-dissipativeness in the incremental setting.
The example Section \ref{section:example} illustrates the potential of $p$-dissipativity theory to analyze
classical switching and oscillatory circuits.
The papers ends with conclusions in Section \ref{section:conclusions}.
\section{Linear complementarity systems and incremental passivity}
\label{section:lcs}
\subsection{Linear complementarity systems}
A linear complementarity system \cite{Schaft1996} consists
of a linear dynamical system subject to a complementarity constraint
\begin{equation}
	\begin{cases}
		\dot{x} = A x + B u + Bv  
		\\
		y = C x + D u
		\\
		0 \leq u \perp y \geq 0 , 
	\end{cases}
	\label{eq:lcp}
\end{equation}
where $x \in \RE^{n}$ is the state variable, $u \in \RE^{m}$ and $y \in \RE^{m}$ are the 
so-called complementary variables, and $v \in \RE^{m}$ is an additional control input.
The matrices $A ,B, C$, and $D$ are constant and of the appropriate dimensions.
The complementarity condition $0 \leq u(t) \perp y(t) \geq 0$
is a compact representation of the following three
conditions: i) $u \in \RE_{+}^{m}$, 
ii) $y \in \RE_{+}^{m}$, and iii) $\langle u(t), y(t) \rangle = 0$. 
A solution of the linear complementarity system \eqref{eq:lcp} is any tuple $(x, u, y,v)$ such that  
$x: \RE_{+} \to \RE^{n}$ is an absolutely continuous function and $(x,u,y,v)$ satisfies \eqref{eq:lcp} 
for almost all forward times $t \in \RE_{+}$. 
In general, we assume that the initial conditions $x(0) = x_{0}$ are such that the complementarity conditions hold.
This implies the absence of jumps in the initial condition and in the complementarity variables \cite{Camlibel2002}.
\subsection{Incremental passivity}
A cornerstone in the analysis of linear complementarity systems is to observe that the complementarity relation 
	\begin{equation}
		R_{\perp} = \left\{(y, \zeta) \in \RE^{m} \times \RE^{m} | 0 \leq y \perp -\zeta \geq 0 \right\}
		\label{eq:complementarityRelation}
	\end{equation}
defines an incrementally passive relation. We recall the definition and the proof of that property.
\begin{definition}
	A relation $R\subseteq \RE^{m} \times \RE^{m}$ is incrementally passive if for any 
	$(y_{1},\zeta_{1}) \in R$, and any $(y_{2}, \zeta_{2}) \in R$ the inequality
	\begin{equation}
	\langle y_{1} - y_{2}, \zeta_{1} - \zeta_{2} \rangle \geq 0
	\label{eq:static:incPassive}
	\end{equation}
	holds.
\end{definition}
\begin{proposition}
	$R_{\perp}$  is incrementally passive.
\label{prop:complementarity:passive}
\end{proposition}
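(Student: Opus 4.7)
The plan is to unpack the definition of $R_\perp$ componentwise and then expand the inner product $\langle y_1-y_2,\zeta_1-\zeta_2\rangle$ into four cross-terms, two of which vanish by the complementarity condition while the other two are nonpositive and therefore contribute nonnegatively after the sign flip from the subtraction.

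More concretely, I would first record what membership in $R_\perp$ means for each pair: $(y_i,\zeta_i)\in R_\perp$ translates into $y_i\geq 0$, $\zeta_i\leq 0$ (since $-\zeta_i\geq 0$), and $\langle y_i,-\zeta_i\rangle=0$, which rearranges to $\langle y_i,\zeta_i\rangle=0$ for $i=1,2$. These are the only three facts I need.

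Next I would expand
\[
\langle y_1-y_2,\zeta_1-\zeta_2\rangle=\langle y_1,\zeta_1\rangle-\langle y_1,\zeta_2\rangle-\langle y_2,\zeta_1\rangle+\langle y_2,\zeta_2\rangle.
\]
The two diagonal terms $\langle y_1,\zeta_1\rangle$ and $\langle y_2,\zeta_2\rangle$ vanish by the complementarity identity just recalled. The two remaining cross-terms satisfy $\langle y_1,\zeta_2\rangle\leq 0$ and $\langle y_2,\zeta_1\rangle\leq 0$, because in each case one factor is a vector of nonnegative components and the other is a vector of nonpositive components, so the componentwise product is nonpositive. Consequently $-\langle y_1,\zeta_2\rangle-\langle y_2,\zeta_1\rangle\geq 0$, which is exactly the desired inequality \eqref{eq:static:incPassive}.

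There is no real obstacle here: the only subtlety is the sign convention built into the definition of $R_\perp$ (the relation is defined with $-\zeta\geq 0$ rather than $\zeta\geq 0$), which is what makes the cross-terms nonpositive in the correct direction so that the minus signs in the expansion turn them into nonnegative contributions. If one worked instead with the ``monotone'' convention $(y,\zeta)$ with $\zeta\geq 0$, the very same argument would yield the standard monotonicity $\langle y_1-y_2,\zeta_1-\zeta_2\rangle\geq 0$, which is really the content of the statement.
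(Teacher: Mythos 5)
Your proof is correct and follows essentially the same route as the paper's: expand the increment inner product into four terms, cancel the two diagonal terms via the complementarity identity $\langle y_i,\zeta_i\rangle=0$, and bound the two cross-terms using the sign conditions $y_i\geq 0$, $-\zeta_i\geq 0$. No gaps.
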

\begin{proof}
	Take $(y_{1}, \zeta_1)\in R_{\perp}$ and $(y_{2}, \zeta_2)\in R_{\perp}$. Then, 
	$
	\langle y_{1} - y_{2}, \zeta_{1} - \zeta_{2} \rangle = 
	\langle -y_{2} , \zeta_{1}  \rangle + \langle y_{1} , -\zeta_{2}  \rangle + \underbrace{\langle y_{1} , \zeta_{1}  \rangle}_{=0} + \underbrace{\langle y_{2} , \zeta_{2}  \rangle}_{=0} 
	= \underbrace{\langle y_{2} , -\zeta_{1}}_{\geq 0}  \rangle + \underbrace{\langle y_{1} , -\zeta_{2}  \rangle}_{\geq 0}  \geq 0
	$. \vspace{1mm}
\end{proof}
An alternative description of the  relation $R_{\perp}$ is via  the multivalued map
	\begin{equation} 
	\zeta = \varphi_{\!\perp}(y) \in \left\{ \zeta \in \RE^{m}  | 0 \leq y \perp -\zeta \geq 0 \right\}
	\end{equation}
which leads to the  feedback representation in Figure \ref{fig:lure:complementarity}.
\begin{figure}[htpb]
	\centering
	\includegraphics[width=0.25\textwidth]{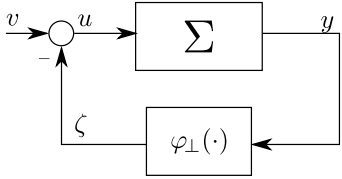}
	\caption{Block diagram of the linear complementarity system \eqref{eq:lcp}}
	\label{fig:lure:complementarity}
\end{figure}
This Lure type representation of linear complementarity systems
calls for an analysis rooted in passivity theory: the linear complementary system is incrementally passive as the negative feedback interconnection of a linear passive
system $\Sigma$ with an incrementally passive relation.
Passivity of the linear part is a standard assumption in the literature on
linear complementarity systems; it guarantees existence and uniqueness of solutions for \eqref{eq:lcp}. 
Details can be found in  \cite{Camlibel2002}, \cite{Brogliato2003}, \cite{miranda2017} based on the following additional
assumption, which ensures well-posedness of the closed loop in the presence of the throughput term $D$, \cite{Camlibel2002}. 
\begin{assumption}
	The linear part of \eqref{eq:lcp} is a minimal realization, it is passive and the matrix
	\begin{equation}
		\begin{bmatrix}
			B \\ D + D^{\top}
		\end{bmatrix}
	\end{equation}
	has full column rank.  \hfill $\lrcorner$
	\label{assumption:rank}
\end{assumption}
Passivity of the linear part of \eqref{eq:lcp} from $u$ to $y$ reads
\begin{equation}
\dot{V}(x) \leq \langle y, u\rangle
\label{eq:lin_pass}
\end{equation}
where the derivative of the quadratic storage $V:=x^T P x$, $P=P^T > 0$,
is computed along the linear dynamics.
Passivity and incremental passivity coincide for linear systems.
In fact, for any pair of trajectories $x_i$, outputs $y_i$ and inputs $u_i$,
the incremental dynamics characterized by the variables
$\Delta x = x_1-x_2$, $\Delta y = y_1-y_2$, $\Delta u = u_1-u_2$
satisfies
\begin{equation}
\dot{V}(\Delta x) \leq \langle \Delta y, \Delta u\rangle \ .
\label{eq:lin_incr_pass}
\end{equation}
Since the negative feedback interconnection of incrementally passive systems is incrementally passive \cite{Schaft1996},
the closed loop linear complementarity system is incrementally passive. 
For any constant input $v$, the resulting closed loop is thus incrementally stable, that is,
there exists a nondecreasing function $\beta$ such that
$$
|x_1(t) - x_2(t)| \leq \beta(|x_1(0)-x_2(0)|) \qquad \forall t \geq 0 
$$
for any pair of trajectories $x_1(\cdot)$, $x_2(\cdot)$ of \eqref{eq:lcp}. 
Furthermore, if the passive inequality \eqref{eq:lin_pass} is strict, the resulting 
closed loop becomes incrementally asymptotically stable; its trajectories
converge towards each other
$$
\lim_{t\to\infty} |x_1(t) - x_2(t)| = 0 \ .
$$
In this case, an equilibrium point is necessarily unique and globally stable. 
The concept of incremental stability \cite{angeli2000} is analog to  the concept of differential stability in  the theory of contraction \cite{lohmiller1998} or convergent systems
\cite{pavlov2005}. But it does not require any differentiability of the system dynamics. 
\subsection{Beyond linear complementarity relations}
\label{sec:beyond}
For the purpose of this paper, the linear complementarity condition 
can be replaced by \emph{any} incrementally passive static relation.
Figure \ref{fig:staticMaps} provides an illustration of incrementally passive memoryless 
nonlinearities. 
Those multivalued maps are widely used for modeling electronic circuits. 
For example, the three graphs in Figure \ref{fig:staticMaps}
represent the ideal current-voltage characteristic of a
diode, of a zener diode, and of an array of diodes \cite{Acary2010}, \cite{lootsma1999}.
\begin{figure}[htpb]
	\centering
	\includegraphics[width=0.35\textwidth]{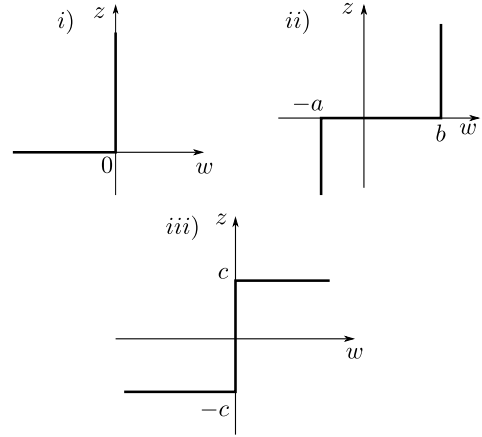}
	\caption{Popular examples of incrementally passive relations.}
	\label{fig:staticMaps}
\end{figure}
Denoting $(w,z) \in R_i$ any pair $(w,z)$ that belongs to the $i$-th relation in Figure \ref{fig:staticMaps},
we extend the class of linear complementarity systems to the family of systems of the form
\begin{equation}
	\begin{cases}
	\dot{x} = A x + B u + Bv
	\\
	y = C x
	\\
	(y,-u) \in R_i \ .
	\end{cases}
		\label{eq:diff:inclusion}
\end{equation}
Following the approach of linear complementarity systems, 
for a solution of \eqref{eq:diff:inclusion} we mean any tuple $(x,u,y,v)$ such that  
$x: \RE_{+} \to \RE^{n}$ is an absolutely continuous function and $(x,u,y,v)$ satisfies \eqref{eq:diff:inclusion} 
for almost all forward times $t \in \RE_{+}$. 
Indeed, the closed loop \eqref{eq:diff:inclusion} allows for the
block diagram representation in Figure \ref{fig:lure:complementarity},
with $\varphi_{\perp}$ replaced by $\varphi_{R_i}$, 
the static multivalued map associated to the relation $R_i$.
Thus, for any passive relation $R_i$, 
the closed loop \eqref{eq:diff:inclusion} is incrementally passive.
We remark that incremental passivity 
guarantees existence and uniqueness of solutions 
also for general \emph{maximal monotone} static multi-valued maps,
\cite{Brogliato2003}. 
\section{Dominance and $p$-dissipativity}
\label{section:dominance}
\subsection{Dominance}
Dominance was recently introduced in \cite{forni2017}, \cite{Forni2017b}, \cite{Miranda2017b} as a generalization
of incremental stability for smooth nonlinear systems. Motivated by dominance analysis of linear complementarity systems,
we extend the definition of dominance in a nonsmooth setting,
replacing {\it differential} analysis by {\it incremental} analysis as in the previous section.
For the sake of simplicity in this section, and the rest of the paper, we consider generic pairs of trajectories $x_1(\cdot)$ and $x_2(\cdot)$,
and we adopt the notation $\Delta x = x_1 - x_2$ to denote their mismatch. $\Delta \dot{x} = \dot{x}_1 - \dot{x}_2$ is defined for almost
every $t$ by the right-hand side of \eqref{eq:diff:inclusion} computed for $x_1$ and $x_2$, respectively. A similar notation is adopted
for inputs $\Delta u = u_1 -u_2$ and outputs $\Delta y = y_1 - y_2$. Finally, we say that a symmetric matrix $P$ has inertia $\{p, 0, n-p \}$ 
when it has $p$ negative eigenvalues and $n-p$ positive eigenvalues.
\begin{definition} \label{definition:idominance}
	The nonsmooth system  \eqref{eq:diff:inclusion}
	is $p$-dominant with rate $\gamma \geq 0$  if there exist a matrix $P = P^{\top}$ with inertia $\{p, 0, n-p \}$ and a constant 
	$\varepsilon \geq 0$ such that for any pair trajectories of \eqref{eq:diff:inclusion},
	\begin{equation}
		\begin{bmatrix}
			\Delta\dot{x} \\ \Delta{x}
		\end{bmatrix}\\
		\begin{bmatrix}
			0 & P 
			\\
			P & 2 \gamma P + \varepsilon I
		\end{bmatrix}
		\begin{bmatrix}
			\Delta\dot{x} \\ \Delta{x}
		\end{bmatrix} \leq 0.
		\label{eq:idominance}
	\end{equation}
	Strict $p$-dominance holds for $\varepsilon > 0$. 
\end{definition}
Note that dominance is just  incremental stability if $P$ is positive definite, which corresponds to $p=0$. But we are interested 
in the generalization corresponding to $p=1$ and to $p=2$.
For smooth closed systems  $\dot{x} = f(x)$,
\eqref{eq:idominance} is equivalent to the linear matrix inequality
inequality
	\begin{equation}
		 \partial f(x)^{\top} P + P \partial f(x) + 2
		\gamma P + \varepsilon I \leq 0 \qquad \forall x \in \RE^n
		\label{eq:lmi:dominance}
	\end{equation}
where $\partial f(x)$ denotes the Jacobian of $f$ at $x$. 
In the linear case, $f(x) = A x$, \eqref{eq:lmi:dominance} implies
the  existence of an invariant splitting such that $\RE^{n} = E_{n} \oplus E_{n-p}$, 
where $E_{p}$ is the $p$-dimensional eigenspace
associated to the \emph{dominant} modes of $A$ (eigenvalues of $A$ whose real part is larger than $-\gamma$),
and $E_{n-p}$ is the $(n-p)$-dimensional eigenspace associated
to the \emph{transient} modes of $A$ (i.e., the eigenvalues of $A$ whose real part is smaller than $-\gamma$).
Roughly speaking, in the nonlinear case the property of dominance forces the asymptotic behavior
to be $p$-dimensional \cite[Theorem 2]{Forni2017b}, as shown by the analysis of the linearized flow in \cite[Theorem 1]{Forni2017b}.
The following theorem extends \cite[Theorem 2]{Forni2017b} to the nonsmooth case. 
\begin{theorem}
	Assume $v$ constant and suppose that all the trajectories of $\eqref{eq:inclusion}$ are bounded.
	Let $\Omega(x)$ be the set of all $\omega$-limit points of $x$ and
	let \eqref{eq:inclusion} be strictly $p$-dominant with rate $\gamma \geq 0$. 
	Then, the flow on the $\Omega(x)$ is
	topologically equivalent to the flow of a $p$-dimensional system.
	\label{theorem:pdominance:behaviour}
\end{theorem}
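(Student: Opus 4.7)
The plan is to read the quadratic matrix inequality \eqref{eq:idominance} as an incremental dissipation inequality and then replay, with only bookkeeping changes, the argument of \cite[Theorem 2]{Forni2017b}. With the quadratic form $V(\xi)=\xi^{\top}P\xi$, for any pair of solutions of \eqref{eq:diff:inclusion} the function $t\mapsto V(\Delta x(t))$ is absolutely continuous (because $\Delta x$ is), and expanding the block matrix in \eqref{eq:idominance} yields, for almost every $t$,
$$
\frac{d}{dt}V(\Delta x(t)) \;\leq\; -2\gamma V(\Delta x(t)) \;-\; \varepsilon|\Delta x(t)|^{2}.
$$
Strictness ($\varepsilon>0$) is what allows me to bypass the Jacobian-based arguments of the smooth proof: the remainder of the analysis only requires $\Delta x$ to be an absolutely continuous curve in $\RE^{n}$.

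From this dissipation inequality I would deduce forward-invariance and contraction properties of the quadratic cone $\mathcal{K} := \{\xi \in \RE^{n} : V(\xi) \leq 0\}$. Integrating the inequality shows that $V(\Delta x(\cdot))$ cannot cross from non-positive to positive values, and that components transverse to $\mathcal{K}$ (those living in the positive eigenspace $E_{n-p}$ of $P$, identified in the excerpt with the transient subspace) contract exponentially at rate at least $\gamma$. Because $P$ has inertia $\{p,0,n-p\}$, the cone $\mathcal{K}$ is a quadratic cone of rank $p$ in the sense of S\'anchez: it contains a $p$-dimensional linear subspace but no $(p+1)$-dimensional one. Consequently, for any two trajectories the incremental dynamics unfold asymptotically inside a rank-$p$ cone, with the $E_{n-p}$-component dying out.

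Transferring this to the $\omega$-limit set $\Omega(x)$, which is compact and invariant by the boundedness assumption, the increment between any two trajectories in $\Omega(x)$ lies in $\mathcal{K}$ with vanishing $E_{n-p}$-component. Projecting along $E_{n-p}$ onto the $p$-dimensional dominant eigenspace $E_{p}$ then yields the homeomorphism between the flow on $\Omega(x)$ and a $p$-dimensional flow, exactly as in the smooth case. The main obstacle, and the only thing the non-smooth setting forces one to check, is that the cone-invariance, contraction and projection steps of \cite[Theorem 2]{Forni2017b}, originally written for smooth ODEs, remain valid when trajectories are merely absolutely continuous solutions of the differential inclusion \eqref{eq:diff:inclusion}. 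This turns out to be a matter of bookkeeping rather than substance, since each step only requires absolute continuity of $V(\Delta x(t))$ and the a.e.\ validity of the dissipation inequality above, both of which are in hand from the solution concept of Section \ref{section:lcs}.
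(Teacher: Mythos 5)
Your overall route is the paper's: form $V(\Delta x)=\Delta x^{\top}P\Delta x$, read \eqref{eq:idominance} as $\tfrac{d}{dt}V(\Delta x)\leq -2\gamma V(\Delta x)-\varepsilon\Vert\Delta x\Vert^{2}$ a.e., and conclude by projecting $\Omega(x)$ along the positive eigenspace of $P$ onto the $p$-dimensional negative eigenspace. Two points need repair, though. First, your claim that the increment between two points of $\Omega(x)$ has \emph{vanishing} $E_{n-p}$-component is false in general: it would place $\Omega(x)$ inside an affine translate of the $p$-dimensional dominant eigenspace, which already fails for a non-planar limit cycle of a $2$-dominant system in $\RE^{3}$. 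What the argument actually delivers --- and what the paper proves by integrating the dissipation inequality into \eqref{eq:invariance} --- is the weaker statement $V(\Delta\bar{x})<0$ for any two distinct points $\bar{x}_{1},\bar{x}_{2}\in\Omega(x)$. That weaker statement is exactly what is needed: the kernel of the projection $\Pi$ onto $\mathcal{H}_{P}$ along $\mathcal{V}_{P}$ is $\mathcal{V}_{P}$, on which $V>0$, so $V(\Delta\bar{x})<0$ together with $\Delta\bar{x}\neq 0$ forces $\Pi\Delta\bar{x}\neq 0$, i.e.\ $\Pi$ is injective on $\Omega(x)$. The limit set is a graph over a subset of $\mathcal{H}_{P}$, not a subset of a $p$-plane.

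Second, forward invariance of the cone $\{\xi: V(\xi)\leq 0\}$ by itself does not show that increments between limit points ever enter that cone, let alone strictly. You still have to argue, using invariance and boundedness of $\Omega(x)$ to run the estimate \eqref{eq:invariance} from arbitrarily far in the past, that the term $e^{-2\gamma t}V(\Delta x(0))$ becomes negligible while the integral term $-\varepsilon\int_{0}^{t}e^{-2\gamma(t-\tau)}\Vert\Delta x(\tau)\Vert^{2}\,d\tau$ stays strictly negative for distinct points. That is the substantive step of the proof; once $V(\Delta\bar{x})<0$ is established, the projection and the topological equivalence are indeed the bookkeeping you describe.
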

\begin{proof}
	Consider any pair of trajectories $x_1(\cdot)$ and $x_2(\cdot)$,
	define the increment $\Delta{x}(\cdot)$, and consider the quadratic form $V(\Delta x(t)) = \Delta x(t)^T P \Delta x(t)$.
	From \eqref{eq:idominance},
	\begin{displaymath}
		\frac{d}{dt} V(\Delta{x}(t))  \leq -2 \gamma V(\Delta{x}(t)) - \varepsilon \Vert
		\Delta{x}(t) \Vert^{2},
	\end{displaymath}
	therefore
	\begin{displaymath}
		\frac{d}{dt} e^{2 \gamma t} V(\Delta{x}(t))  \leq -\varepsilon e^{2\gamma t} \Vert 
		\Delta{x}(t) \Vert^{2}.
	\end{displaymath}
	By integration, 
	\begin{equation}
		\label{eq:invariance}
		V(\Delta{x}(t)) \leq e^{-2 \gamma t} V(\Delta{x}(0)) - \varepsilon \int_{0}^{t}
		e^{-2 \gamma(t -\tau)} \Vert \Delta{x}(\tau) \Vert^{2} d \tau
	\end{equation}
	Let $\bar{x}_{1}$ and $\bar{x}_{2}$ be two different points of $\Omega(x)$
	and define $\Delta \bar{x} = \bar{x}_1 - \bar{x}_2$.
	%associated to the trajectories $x_{1}(t)$ and $x_{2}(t)$ respectively. 
	Note that both $\bar{x}_{1}$ and $\bar{x}_{2}$ are accumulation points of a suitable trajectory, 
	therefore $\Delta \bar{x} \neq 0$ and \eqref{eq:invariance} implies 
	\begin{equation}
		V(\Delta_{\bar{x}}) <0.
		\label{eq:invariance2}
	\end{equation}	
	Let $\mathcal{H}_{P}$, $\mathcal{V}_{P}$ be the eigenspaces of $P$ associated to the
	$p$ negative eigenvalues of $P$, and $n-p$ positive eigenvalues of $P$, respectively.
	Let $\Pi: \RE^{n} \to \mathcal{H}_{P}$ be the projection onto $\mathcal{H}_{P}$ along $\mathcal{V}_{P}$.
	We claim that $\Pi$ restricted to ${\Omega}$ is one-to-one. In fact, assume by contradiction that for
	$\bar{x}_{1}, \bar{x}_{2} \in \Omega(x)$, $\bar{x}_{1} \neq \bar{x}_{2}$ implies 
	$\Pi(\Delta_{\bar{x}}) = 0$, it follows that $\Delta_{\bar{x}} \in \mathcal{V}_{P}$
	and therefore $V(\Delta_{\bar{x}}) > 0$ which contradicts \eqref{eq:invariance2}. Hence,
	$\Pi$ restricted to ${\Omega(x)}$ is one-to-one. 
	
	Now, for each constant input $v$, consider the equivalent representation of the system \eqref{eq:diff:inclusion} based on
	the differential inclusion  
	\begin{equation}
	\dot{x} \in \mathbf{F}_v(x) \ .
	\label{eq:inclusion}
\end{equation} 
	Using the results above, if $y \in \Pi \Omega(x)$ then there exists a unique
	initial condition $z(0) \in \Omega(x)$ such that $y = \Pi z(0)$ and the flow $\Pi z(t)$
	in $\mathcal{H}_{P}$ is generated by the vector field
	\begin{equation}
		\mathbf{G}_v(y) = \Pi \mathbf{F}_v (\Pi^{-1} y), \quad y \in \Omega(x)
\end{equation} 
	which is $p$-dimensional.
\end{proof}
Theorem \ref{theorem:pdominance:behaviour} shows that the asymptotic behavior of 
a strict $p$-dominant system is strongly constrained for small values of $p$. 
\begin{corollary}
	\label{corollary:behaviour}
	Let the assumptions of Theorem \ref{theorem:pdominance:behaviour} hold. In addition,
	assume that solutions of  \eqref{eq:diff:inclusion} are unique. Then all solutions asymptotically converge
	to
	\begin{enumerate}
		\item a unique equilibrium point, if $p = 0$.
		\item an equilibrium point, if $p =1$.
		\item an equilibrium point, a set of equilibrium points and connecting arcs, or a
			limit cycle, if $p = 2$.
	\end{enumerate}
\end{corollary}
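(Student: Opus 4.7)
The plan is to split on the three values of $p$ and invoke Theorem~\ref{theorem:pdominance:behaviour} together with standard low-dimensional dynamical-systems results. In each case, boundedness of trajectories combined with uniqueness of solutions guarantees that every $\omega$-limit set $\Omega(x)$ is nonempty, compact, connected, and invariant under the constant-$v$ flow.

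For $p=0$, $P$ is positive definite, so $V(\Delta x) = \Delta x^{\top} P \Delta x$ is a bona fide incremental Lyapunov function and the integral estimate derived inside the proof of Theorem~\ref{theorem:pdominance:behaviour} (with $\varepsilon>0$) gives $|x_1(t)-x_2(t)| \to 0$ for every pair of trajectories. Applying this to $x_1$ arbitrary and $x_2$ a trajectory through a point of $\Omega(x_1)$ forces $\Omega(x_1)$ to be a singleton; invariance makes it an equilibrium, and the incremental contraction makes that equilibrium unique and globally attractive.

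For $p=1$, Theorem~\ref{theorem:pdominance:behaviour} reduces the flow on $\Omega(x)$ to a continuous one-dimensional flow on the compact invariant set $\Pi\Omega(x) \subset \mathcal{H}_P$. A continuous flow on a compact subset of the line admits no nontrivial periodic orbit because trajectories are monotone between consecutive equilibria; hence every orbit of the reduced flow converges to a reduced equilibrium, which via $\Pi^{-1}$ corresponds to an equilibrium of \eqref{eq:diff:inclusion}. For $p=2$, the reduced dynamics is a continuous two-dimensional flow with unique solutions on a compact invariant set, so the Poincaré--Bendixson theorem classifies each $\omega$-limit set as either an equilibrium, a periodic orbit, or a union of equilibria joined by heteroclinic/homoclinic arcs; pulling this classification back through $\Pi^{-1}$ delivers the stated alternatives.

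The main technical obstacle is the $p=2$ case: one has to justify that the reduced vector field $\mathbf{G}_v = \Pi\,\mathbf{F}_v(\Pi^{-1}\,\cdot)$ inherits enough regularity to make Poincaré--Bendixson applicable, even though $\mathbf{F}_v$ is only set-valued. The uniqueness hypothesis added in the corollary is precisely what supplies this; combined with the fact that $\Pi$ restricted to $\Omega(x)$ is a continuous bijection from a compact set, hence a homeomorphism onto its image, it ensures that $\mathbf{G}_v$ generates a well-defined continuous flow with unique solutions on the compact invariant set $\Pi\Omega(x)$, so that the classical two-dimensional trichotomy can be invoked without modification.
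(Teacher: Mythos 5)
Your proposal is correct and follows essentially the same route as the paper: the paper's own (very terse) justification likewise combines the reduction of Theorem~\ref{theorem:pdominance:behaviour} with uniqueness of solutions to rule out trajectory crossings, monotonicity of one-dimensional flows for $p=1$, and the Poincar\'e--Bendixson theorem (valid under uniqueness of solutions) for $p=2$, with the $p=0$ case already settled by the incremental-contraction argument given earlier in the paper. Your additional remarks on $\Pi$ being a homeomorphism onto its image and on the regularity of the reduced flow are a faithful elaboration of the same idea rather than a different proof.
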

Under uniqueness of solutions, distinct trajectories cannot intersect.
For $p=1$, the asymptotic dynamics are one-dimensional,  forcing bounded trajectories
to converge to some fixed point. Uniqueness of solutions is also sufficient for guaranteeing the validity of the 
Poincar{\'e}-Bendixson Theorem, see e.g., \cite[Theorem 5.3]{ciesielski1994}. 
Hence, under the assumption of uniqueness of solutions, a $2$-dominant system with a compact limit set that contains no equilibrium point
has a closed orbit.
\subsection{Incremental $p$-dissipativity}
Dissipativity theory is an interconnection theory for stability analysis.  In the same way,  $p$-dissipativity is an interconnection theory  for dominance analysis 
\cite{Forni2017b}, \cite{Miranda2017b}. It mimics standard dissipativity theory 
in the differential/incremental setting but relies on quadratic storage functions that have a prescribed inertia. 
\begin{definition}
	A nonsmooth system \eqref{eq:diff:inclusion} is $p$-dissipative with rate $\gamma \geq 0$ 
	and incremental supply $w: \RE^{m} \times \RE^{m} \to \RE$
	\begin{equation}
	w(\Delta{u}, \Delta{y}) :=
	\begin{bmatrix}
		\Delta{y} \\ \Delta{u}
	\end{bmatrix}^{\top}
	\begin{bmatrix}
		Q & L^{\top}
		\\
		L & R
	\end{bmatrix}
	\begin{bmatrix}
		\Delta{y} \\ \Delta{u}
	\end{bmatrix}
\end{equation}
	if there exists a matrix $P = P^{\top}$ with inertia $\{p, 0, n-p \}$ and $\varepsilon \geq0$ such that
	\begin{equation}
		\begin{bmatrix}
			\Delta\dot{x} \\ \Delta{x}
		\end{bmatrix}\\
		\begin{bmatrix}
			0 & P 
			\\
			P & 2 \gamma P + \varepsilon I
		\end{bmatrix}
		\begin{bmatrix}
			\Delta\dot{x} \\ \Delta{x}
		\end{bmatrix} \leq w(\Delta{y}, \Delta{u})
		\label{eq:inc:pdissipative}
	\end{equation}
	for any pair of trajectories.
	Strict $p$-dissipativity holds for $\varepsilon > 0$.
\end{definition}
$0$-dissipativity coincides with the classical concept  of incremental dissipativity.
$p$-dissipativity allows for an interconnection theory for
non-smooth systems, as clarified by the next theorem. 
The  simplest example is given by the closed loop in  
Figure \ref{fig:lure:complementarity}. 
\begin{theorem}
	\label{theorem:pdissipative}
	Let $\Sigma_{1}$ and $\Sigma_{2}$ be (strict) $p_{1}$ and 
	$p_{2}$ dissipative respectively, both with rate $\gamma \geq 0$ and supplies
	\begin{displaymath}
		w^i(\Delta{u}^{i}, \Delta{y}^{i}) = 
		\begin{bmatrix}
			\Delta{y}^{i} \\ \Delta{u}^{i}
		\end{bmatrix}^{\top}
		\begin{bmatrix}
			Q_{i} & L_{i}
			\\
			L_{i}^{\top} & R_{i}
	\end{bmatrix}
	\begin{bmatrix}
		\Delta{y}^{i} \\ \Delta{u}^{i}
	\end{bmatrix}, \quad i = 1, 2.
	\end{displaymath}
	The negative feedback interconnection 
	\begin{displaymath}
		u^{1} = -y^{2} + v^{1}, \quad u^{2} = y^{1} + v^{2}
	\end{displaymath}
	of $\Sigma_{1}$ and
	$\Sigma_{2}$ is (strict) $(p_{1} + p_{2})$-dissipative with respect to the input 
	$v := [v^{1}, v^{2}]^{\top}$ and the output $y := [y^{1}, y^{2}]^{\top}$,
	with incremental supply given by
	\begin{displaymath}
		\begin{bmatrix}
			\Delta y \\ \Delta v 
		\end{bmatrix}^{\top}
		\begin{bmatrix}
			Q_{1} + R_{2} & -L_{1} + L_{2}^{\top}& L_{1} & R_{2}
			\\
			-L_{1}^{\top} + L_{2} & Q_{2} + R_{1} & -R_{1} & L_{2}
			\\
			L_{1}^{\top} & -R_{1}^{\top} & R_{1} & 0
			\\
			R_{2} & L_{2}^{\top} & 0 & R_{2}
		\end{bmatrix}
		\begin{bmatrix}
			\Delta y \\ \Delta v 
		\end{bmatrix} \ .
	\end{displaymath}
	and rate $\gamma \geq 0$.
	In addition, if
	\begin{displaymath}
		\begin{bmatrix}
			Q_{1} + R_{2} & - L_{1} + L_{2}^{\top}
			\\
			-L_{1}^{\top} + L_{2} & Q_{2} + R_{1}
		\end{bmatrix} \leq 0
	\end{displaymath}
	then the interconnection is (strictly) $(p_{1} + p_{2})$-dominant.
	\label{eq:dissipativeConexion}
\end{theorem}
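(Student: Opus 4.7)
The plan is to mimic the classical dissipativity interconnection proof but to keep careful track of inertias. First, I would stack the two states as $\Delta x := [\Delta x^{1\top}, \Delta x^{2\top}]^\top$ and build the block-diagonal storage matrix $P := \mathrm{blkdiag}(P_1, P_2)$. Because each $P_i$ has inertia $\{p_i, 0, n_i-p_i\}$ and the two blocks are orthogonal, $P$ has inertia $\{p_1+p_2, 0, (n_1+n_2)-(p_1+p_2)\}$, which is exactly what is required for $(p_1+p_2)$-dominance/dissipativity. Writing each subsystem's $p_i$-dissipativity inequality \eqref{eq:inc:pdissipative} with $\varepsilon_i\geq 0$ (strict if strict dissipativity is assumed) and summing the two inequalities, the left-hand side collapses into exactly the quadratic form in \eqref{eq:inc:pdissipative} for the stacked state, with the common rate $\gamma$ and with $\varepsilon := \min\{\varepsilon_1,\varepsilon_2\}$.

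The rest is algebra: substitute the interconnection $\Delta u^1 = -\Delta y^2 + \Delta v^1$ and $\Delta u^2 = \Delta y^1 + \Delta v^2$ into $w^1(\Delta u^1,\Delta y^1) + w^2(\Delta u^2,\Delta y^2)$ and collect the resulting quadratic form in the vector $[\Delta y^{1\top}, \Delta y^{2\top}, \Delta v^{1\top}, \Delta v^{2\top}]^\top$. Expanding the four quadratic blocks $\Delta u^{i\top} R_i \Delta u^i$, the four cross terms $2\Delta y^{i\top} L_i \Delta u^i$, and keeping $\Delta y^{i\top}Q_i\Delta y^i$ unchanged, the coefficient of $\Delta y^{1\top}(\cdot)\Delta y^1$ is $Q_1+R_2$, that of $\Delta y^{2\top}(\cdot)\Delta y^2$ is $Q_2+R_1$, the $\Delta y^1$--$\Delta y^2$ cross coefficient is $-L_1+L_2^\top$, and the $\Delta y^i$--$\Delta v^j$ and $\Delta v^i$--$\Delta v^j$ coefficients recover the remaining entries of the target supply matrix. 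Writing the result in block matrix form gives precisely the supply claimed in the statement, establishing $(p_1+p_2)$-dissipativity.

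For the final assertion, specialize to constant $v$ so that $\Delta v = 0$. Then the right-hand side of the combined dissipation inequality reduces to
\begin{equation*}
\begin{bmatrix}\Delta y^1 \\ \Delta y^2\end{bmatrix}^\top
\begin{bmatrix} Q_1+R_2 & -L_1+L_2^\top \\ -L_1^\top+L_2 & Q_2+R_1\end{bmatrix}
\begin{bmatrix}\Delta y^1 \\ \Delta y^2\end{bmatrix},
\end{equation*}
which is nonpositive by the assumed matrix inequality, so the left-hand side of \eqref{eq:inc:pdissipative} is nonpositive for every pair of trajectories of the closed loop. This is exactly \eqref{eq:idominance} for the stacked system with matrix $P$ of inertia $\{p_1+p_2,0,\cdot\}$, hence the interconnection is $(p_1+p_2)$-dominant, and strictly so whenever either subsystem is strictly dissipative (so that $\varepsilon>0$).

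The conceptual content is the composition of storage functions plus the usual loop substitution; the only real obstacle is the bookkeeping in the expansion of $w^1+w^2$, so I would devote most care to writing the four block entries cleanly and to verifying that the transposes line up, and to stating clearly that the block-diagonal $P$ inherits the additive inertia needed for the dominance conclusion.
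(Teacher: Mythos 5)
Your argument is correct and is exactly the ``standard arguments of dissipativity theory'' that the paper invokes without writing out: block-diagonal storage $P=\mathrm{blkdiag}(P_1,P_2)$ whose inertia adds, summation of the two incremental dissipation inequalities, substitution of the loop equations into $w^1+w^2$ (your block bookkeeping does reproduce the stated supply matrix), and the reduction to \eqref{eq:idominance} when both trajectories share the same input so that $\Delta v=0$. One small slip: with $\varepsilon=\min\{\varepsilon_1,\varepsilon_2\}$ the conclusion is strict only when \emph{both} subsystems are strictly dissipative, not ``whenever either subsystem is strictly dissipative'' --- if, say, $\varepsilon_2=0$ there is no way to bound $\varepsilon\Vert\Delta x^2\Vert^2$ from the single strict term $\varepsilon_1\Vert\Delta x^1\Vert^2$.
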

\begin{proof}
The proof follows by standard arguments of dissipativity theory. See also \cite{forni2017}.
\end{proof}
Classical dissipativity theory provides a tool to
analyze stable systems, that is, $0$-dominant systems, as interconnection of dissipative
open systems, that is,  $0$-dissipative systems. 
Theorem \ref{theorem:pdissipative} generalizes this conclusion: $1$-dominant systems
can be analyzed as interconnections of $0$-dissipative systems with a $1$-dissipative system; 
 $2$-dominant systems
can be analyzed as interconnections of $0$-dissipative systems with a $2$-dissipative system, or 
as interconnections of two $1$-dissipative systems.
\subsection{$p$-Passivity of linear complementarity systems}
As in the classical theory,  $p$-passivity is $p$-dissipativity for the particular supply rate
	\begin{equation}
	w(\Delta{u}, \Delta{y}) :=
	\begin{bmatrix}
		\Delta{y} \\ \Delta{u}
	\end{bmatrix}^{\top}
	\begin{bmatrix}
		0 & I
		\\
		I & 0
	\end{bmatrix}
	\begin{bmatrix}
		\Delta{y} \\ \Delta{u}
	\end{bmatrix}
	\label{eq:p-passivity}
\end{equation}
We have seen in Section \ref{section:lcs}  that the linear complementarity relation
$R_{\!\perp}$ is $0$-passive (since $0$-passivity and incremental passivity coincide).
Also the static nonlinearities $R_i$ in Section \ref{sec:beyond} are $0$-passive. 
Thus, from Theorem \ref{theorem:pdissipative}, 
the non-smooth system \eqref{eq:diff:inclusion} is the negative feedback loop of a $0$-passive
static nonlinearity with a linear system, whose degree of $p$-passivity determines
the degree of passivity of the closed loop. The degree of passivity of the linear part 
restricts the asymptotic behavior of the system.
We observe that for linear systems of the form $\dot{x} = Ax + Bu$, $y = Cx$,
the inequality \eqref{eq:inc:pdissipative} with \eqref{eq:p-passivity} reduces to the simple feasibility test
$$
A^T P + PA + 2 \gamma P \leq -\varepsilon I \qquad PB = C^T
$$
for some matrix $P=P^T$ with inertia $\{p,0,n-p\}$; a numerically tractable condition.
Also, $p$-passivity has a frequency domain characterization
based on the rate-shifted transfer function $G(s - \gamma) = C {(s I - (A + \gamma I))}^{-1} B$,
\cite{Miranda2017b}:
\begin{proposition}
	\label{proposition:frequency}
	A linear system is $p$-passive if and only if the following two conditions hold,
	\begin{enumerate}
		\item\label{cond:1} $\Re \left\{ G(j \omega - \gamma) \right\} > 0$, 
			for all, $\omega \in \RE \cup \{+\infty\}$.
		\item\label{cond:2} $G(s - \gamma)$ has $p$ poles on the right-hand side of the complex plane.
	\end{enumerate}
\end{proposition}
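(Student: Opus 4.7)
The plan is to reduce the incremental LMI characterization of $p$-passivity to a standard frequency domain condition via a spectral shift together with an indefinite version of the Kalman--Yakubovich--Popov (KYP) lemma.

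First I would convert the incremental $p$-passivity inequality into a purely algebraic condition. Since the system is LTI, for any increment $(\Delta x,\Delta u)$ we have $\Delta \dot x = A \Delta x + B\Delta u$ and $\Delta y = C\Delta x$. Substituting these into \eqref{eq:inc:pdissipative} with the passivity supply \eqref{eq:p-passivity} yields the quadratic form
\begin{equation*}
\begin{bmatrix}\Delta x \\ \Delta u\end{bmatrix}^{\top}
\begin{bmatrix} A^{\top}P + PA + 2\gamma P + \varepsilon I & PB - C^{\top} \\ B^{\top}P - C & 0 \end{bmatrix}
\begin{bmatrix}\Delta x \\ \Delta u\end{bmatrix} \le 0
\end{equation*}
required to hold for all $(\Delta x,\Delta u)$. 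This is equivalent to the pair of feasibility conditions
\begin{equation*}
A^{\top}P + PA + 2\gamma P + \varepsilon I \le 0, \qquad PB = C^{\top},
\end{equation*}
with $P=P^{\top}$ of inertia $\{p,0,n-p\}$. Thus $p$-passivity reduces to a KYP-type feasibility problem in which the Lyapunov part is shifted by $2\gamma P$ and the storage certificate is allowed to be indefinite.

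Second, I would absorb the rate $\gamma$ by setting $\tilde A := A + \gamma I$, so that the feasibility condition becomes the unshifted KYP system
\begin{equation*}
\tilde A^{\top} P + P\tilde A + \varepsilon I \le 0, \qquad PB = C^{\top},
\end{equation*}
for the LTI realization $(\tilde A, B, C, 0)$ whose transfer function is exactly the rate-shifted transfer function $\tilde G(s) = C(sI-\tilde A)^{-1}B = G(s-\gamma)$. At this point I would invoke the indefinite (generalized) KYP lemma, which under the minimality of $(A,B,C)$ implied by Assumption \ref{assumption:rank} asserts the equivalence between existence of an indefinite $P$ with inertia $\{p,0,n-p\}$ satisfying the algebraic conditions above and the joint frequency--pole conditions: (a) $\Re\{\tilde G(j\omega)\}>0$ for all $\omega\in\RE\cup\{+\infty\}$ and (b) $\tilde A$ has exactly $p$ eigenvalues in the open right half plane. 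Translating back through $\tilde A = A+\gamma I$ and $\tilde G(s)=G(s-\gamma)$ delivers conditions \ref{cond:1} and \ref{cond:2}.

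The main obstacle is the invocation of the indefinite KYP lemma rather than its classical positive-definite counterpart: the correspondence between the number of negative eigenvalues of the Lyapunov solution $P$ and the number of right-half-plane poles of the transfer function relies on a generalized Riccati/Hamiltonian spectral factorization argument (see, e.g., Wimmer's inertia theorems and the generalized positive real lemma used in \cite{Miranda2017b}). A secondary technical point is the behavior at $\omega=+\infty$: because $D=0$ the value $\tilde G(j\infty)=0$ meets condition \ref{cond:1} only in a limiting sense, so the strict inequality at infinity should be read as the high-frequency roll-off condition and the strict/non-strict dichotomy handled via the standard small-perturbation argument (replacing $\varepsilon$ by $\varepsilon/2$ to convert strict into non-strict and vice versa). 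Once the indefinite KYP machinery is in place, the remainder of the proof is a direct change of variables.
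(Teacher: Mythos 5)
The paper gives no proof of this proposition --- it is stated as a quoted result from \cite{Miranda2017b}, with the algebraic feasibility test $A^{\top}P+PA+2\gamma P\leq -\varepsilon I$, $PB=C^{\top}$ displayed immediately beforehand. Your argument follows exactly that intended route (reduction of the incremental inequality to this KYP-type feasibility condition, absorption of the rate via $\tilde{A}=A+\gamma I$, and the generalized/indefinite KYP lemma with the inertia correspondence between $P$ and the right-half-plane poles), and your handling of the $\omega=+\infty$ subtlety for strictly proper $G$ is a sensible reading of a looseness that is already present in the statement itself.
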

Frequency domain conditions prove useful
in capturing the limits of the theory and for the selection of the systems parameters,
as shown in the next section.
\section{Switching and oscillating LCS circuits}
\label{section:example}
\subsection{The operational amplifier is $0$-passive}
A model of the operational amplifier is 
shown in Figure \ref{fig:nonsmoothModel1}: a first order model \cite{karki2000}, with additional voltage saturation limits,
implemented via ideal diodes, to take into account the physical limitations of any op-amp device. 
Note that the right-most element $\times 1$ in the model denotes a buffer, 
which decouples the output current from the internal circuit. We make
 the  usual assumption of infinite input impedance $Z_{in} = + \infty$ and $0$ 
output impedance $Z_{out} = 0$.
\begin{figure}[htpb]
	\centering
	\includegraphics[width=0.35\textwidth]{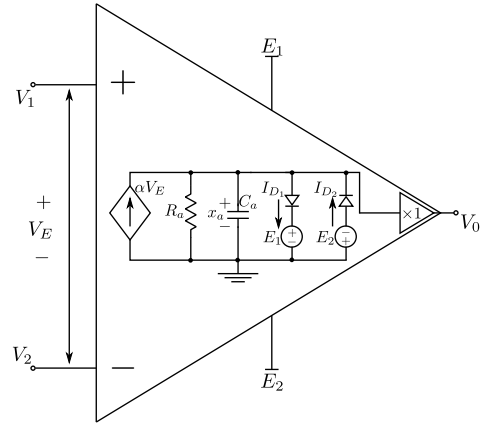}
	\caption{A nonsmooth model of an operational amplifier}
	\label{fig:nonsmoothModel1}
\end{figure}
The model in Figure \ref{fig:nonsmoothModel1} is described by the
linear complementarity system
\begin{subequations}
	\label{eq:opamp:Sys1}
	\begin{align}
		\dot{x}_{a} & = - \frac{1}{R_{a} C_{a}} x_{a} + \frac{\alpha}{C_{a}} V_{E} - \frac{1}{C_{a}} (I_{D_{1}}(t) - I_{D_{2}}) 		
		\label{eq:opamp:1}
		\\
		V_{0} & = x_{a}
		\label{eq:opamp:2}
		\\
		0 & \leq -V_{0} + E_{1} \perp I_{D_{1}} \geq 0,
		\label{eq:opamp:3}
		\\
		0 & \leq V_{0} + E_{2} \perp I_{D_{2}} \geq 0.
		\label{eq:opamp:4}
	\end{align}
\end{subequations}
where we assume that the voltage sources $E_{1}$ and $-E_{2}$ are constant and $E_{i} > 0$, 
$i =1,2$.  The model is derived via Kirchhoff's laws together with the complementarity conditions 
for the diodes \cite{Acary2010}. $x_{a}$ denotes the voltage across the capacitor $C_{a}$. The input is set to 
$V_{E}$ and the output is $V_{0}$.  
For completeness, we write
\eqref{eq:opamp:Sys1}   in the standard linear complementarity form 
\eqref{eq:lcp} by taking $V_{E} = V_{1} - V_{2}$, $u = [ I_{D_{1}}, I_{D_{2}}, E_{1}, E_{2} ]^{\top}, 
v = [ -\alpha V_{1}, -\alpha V_{2},  0,  0]^{\top}$,
$A = -\frac{1}{R_{a}C_{a}}$, $B = \frac{1}{C_{a}} [-1, 1, 0, 0]$, 
\begin{displaymath}
	C = 
	\begin{bmatrix}
		-1 \\ 1 \\ 0 \\ 0 
	\end{bmatrix}, \text{ and }
	D = 
	\begin{bmatrix}
	0 & 0 & 1 & 0
	\\
	0 & 0 & 0 & 1
	\\
	0 & 0 & 0 & 0
	\\
	0 & 0 & 0 & 0
	\end{bmatrix}. 
\end{displaymath}  
A representation of \eqref{eq:opamp:Sys1} is the negative feedback interconnection of a strictly $0$-passive system
with a $0$-passive relation $R$ defined by \eqref{eq:opamp:2},\eqref{eq:opamp:3}, and \eqref{eq:opamp:4},
linking voltage $V_{0}$ and the difference of diode currents $I_{DD} = I_{D_{1}} - I_{D_{2}}$.
The relations $0 \leq - V_{0}(t) + E_{1}$ and 
$0 \leq V_{0}(t) + E_{2}$ imply $-E_{2} \leq V_{0}(t) \leq E_{1}$. Thus, the output of 
the op-amp $V_{0}(t) = x_{a}(t)$ always belongs to the interval $[-E_{2}, E_{1}]$. 
For $V_{0}(t) = E_{1}$,  \eqref{eq:opamp:3} and \eqref{eq:opamp:4} imply 
that $I_{D_{1}}(t) \geq 0$ and $I_{D_{2}}(t) = 0$, respectively, that is, $I_{DD}(t) \geq 0$.
Similarly, $V_{0}(t) = -E_{2}$ implies $I_{DD}(t) \leq 0$ and $V_{0}(t) \in (-E_{2}, E_{1})$ implies $I_{DD}(t) = 0$. 
Hence, $R$ corresponds to the $0$-passive relation represented in Figure \ref{fig:staticMaps}.ii). Its 
associated multivalued function $\varphi_R$ maps the voltage $V_0$ into
\begin{displaymath}
	I_{DD} = \varphi_R(V_0) \in
	\begin{cases}
	(-\infty, 0], & V_{0}(t) = -E_{2}
		\\
		\{0\}, &  -E_{2} < V_{0}(t) < E_{1}
		\\
		[0, +\infty), & V_{0}(t) = E_{1} \ .
	\end{cases}
\end{displaymath}
The operational amplifier model \eqref{eq:opamp:Sys1} is thus given by the feedback loop in 
Figure \ref{fig:openOpamp}, combining the $0$-passive linear system $\Sigma_a$ with matrices
$A = -\frac{1}{R_{a} C_{a}}$, $B = \frac{1}{C_{a}} $, and $C = 1$, with the $0$-passive
nonlinearity $\varphi_R$. The transfer function of $\Sigma_a$ reads 
\begin{equation}
	G(s) = \frac{\frac{1}{C_{a}}}{s + \frac{1}{R_{a} C_{a}}}.
\end{equation}
and Proposition \ref{proposition:frequency} guarantees strict $0$-passivity with rate $\gamma \in [0, \frac{1}{R_{a} C_{a}})$.	
By Theorem \ref{theorem:pdissipative}, the op-amp is thus a strictly $0$-passive device from $V_E$ to $V_0$ with rate $\gamma \in [0, \frac{1}{R_{a} C_{a}})$. 
\begin{figure}[htpb]
	\centering
	\includegraphics[width=0.3\textwidth]{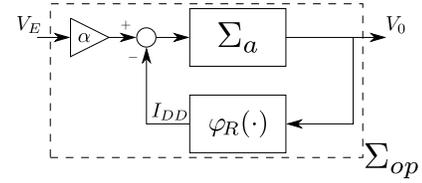}
	\caption{Block diagram of the operational amplifier
		\eqref{eq:opamp:Sys1}}
	\label{fig:openOpamp}
\end{figure}
\subsection{Positive feedback amplifier: multistable Schmitt trigger}
The positive feedback interconnection of 
the op-amp with an additional passive network leads to $1$-passive circuits. For instance, 
the Schmitt trigger circuit represented in Figure \ref{fig:schmittTrigger}
contains an op-amp, whose model is given by \eqref{eq:opamp:Sys1}, and a linear
network $\Sigma_c$ represented by
\begin{equation}
	\Sigma_{c}: \begin{cases}
		\dot{x}_{1} = - \frac{R_{1} + R_{2}}{R_{1} R_{2} C_{1}} x_{1} + \frac{1}{ R_{2} C_{1}} \nu_{1}\\
		y_{1}(t) = -x_{1} \ ,
	\end{cases}
	\label{eq:network:1}
\end{equation}
where $x_{1}$ is the voltage across the capacitor $C_{1}$.
\begin{figure}[htpb]
	\centering
	\includegraphics[width=0.25\textwidth]{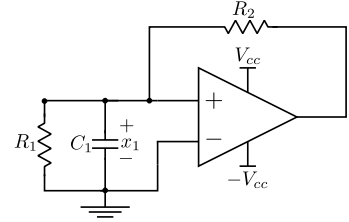}
	\caption{Schmitt Trigger circuit formed as the positive feedback interconnection
		of the circuit in Figure \ref{fig:nonsmoothModel1} with a RC network}
	\label{fig:schmittTrigger}
\end{figure}
Their interconnection is characterized by the positive feedback identity
\begin{equation}
	V_{E} = x_{1}, \quad \nu_{1} = V_0 \ .
	\label{eq:positiveFeedback}
\end{equation}
Positive feedback loops of two $0$-passive systems are not $0$-passive. But
$\Sigma_c$ is also strictly $1$-passive from the input $\nu_1$ to the output $y_1$ with rate 
$\gamma \in (\frac{R_{1} + R_{2}}{R_{1}R_{2}C_{1}}, + \infty)$,
which allows to rewrite \eqref{eq:positiveFeedback} as negative feedback
\begin{equation}
	V_{E} = - y_1, \quad \nu_{1} = V_0 \ .
	\label{eq:positive_as_negativeFeedback}
\end{equation}
Indeed, the positive feedback in \eqref{eq:positiveFeedback} is equivalent to the negative
feedback between a strictly $0$-passive system, the op-amp, and a strictly $1$-passive system, $\Sigma_c$.
Thus, by selecting the circuit parameters to satisfy
\begin{equation}
	\frac{R_{1} + R_{2}}{R_{1}R_{2}C_{1}} < \frac{1}{R_{a} C_{a}},
		\label{eq:schmittTrigger:condition:1}
\end{equation}
Theorem \ref{theorem:pdissipative} guarantees that the Schmitt trigger is strictly $1$-passive 
with rate $\gamma \in (\frac{R_{1} + R_{2}}{R_{1}R_{2}C_{1}}, \frac{1}{R_{a} C_{a}})$.
The closed loop is thus strictly $1$-dominant.
An interpretation of \eqref{eq:schmittTrigger:condition:1} is that the linear circuit $\Sigma_c$ 
must have a slower dynamics than the op-amp dynamics, to determine a dominant behavior of dimension $1$.
Mathematically, \eqref{eq:schmittTrigger:condition:1} guarantees the existence of a common rate $\gamma \geq 0$
for which op-amp and $\Sigma_c$ are respectively $0$-passive and $1$-passive, as required by Theorem \ref{theorem:pdissipative}.
A $1$-passive circuit can be multistable. Bounded trajectories of a strictly $1$-dominant system necessarily converge
to a fixed point. Boundedness of trajectories follows from the saturation of the op-amp voltage, which essentially ``opens the loop'' 
for large overshoots. To enforce multistability, we look for circuit parameters that guarantee the existence
of at least one unstable equilibrium point.  The condition 
\begin{equation}
	\frac{1}{R_{a}} < \frac{\alpha R_{1}}{R_{1} + R_{2}},
	\label{eq:schmittTrigger:condition:2}
\end{equation}
 makes the zero-equilibrium unstable.
The parameters in Table \ref{tab:paramSchmitt} satisfies 
\eqref{eq:schmittTrigger:condition:1} and \eqref{eq:schmittTrigger:condition:2}, enabling bistability.
The value of resistance $R_a$ and capacitance $C_a$ have been taken from \cite{karki2000},
to ensure good matching between simulations and the behavior of a real op-amp component.
Figure \ref{fig:bistable} shows the trajectories of the system from two different initial conditions
$(x_a(0),x_1(0)) \in \{(-2,-2),(2,2)\}$. 
\begin{figure}[htpb]
	\centering
	\includegraphics[width=0.31\textwidth]{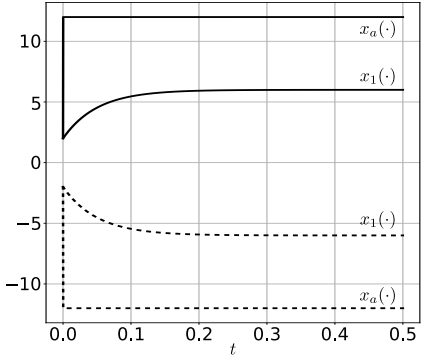} \vspace{-3mm}
	\caption{Schmitt trigger trajectories from two different initial conditions.}
	\label{fig:bistable}
\end{figure}
\begin{table}[htpb]
	\centering
	\begin{tabular}{|c|c|c|c|}
		\hline
		$R_{1} = 1 K \Omega$ & $R_{a} = 1 M \Omega$ & $C_{1} = 100 \mu F$ & $E_{1} = 12 V$
		\\
		\hline
		$R_{2} = 1 K \Omega$ & $\alpha = 0.1$ & $C_{a} = 15.9 nF$ & $E_{2} = 12 V$
		\\
		\hline
	\end{tabular}
	\caption{Parameter values for the simulation of the Schmitt trigger.}
	\label{tab:paramSchmitt}
\end{table}
\subsection{Mixed feedback amplifier: relaxation oscillator}
The interconnection of the op-amp with two slow passive networks,
one with negative and one with positive feedback, leads to $2$-passive circuits.
For instance, the circuit in Figure \ref{fig:relaxOscillator} is a typical architecture
for the generation of relaxation oscillations. It is 
derived from the Schmitt trigger through the addition of a slow network $\Sigma_d$, in negative feedback,
represented by 
\begin{equation}
	\Sigma_{d}:
	\begin{cases}
		\dot{x}_{2} = - \frac{R_{3} + R_{4}}{R_{3} R_{4} C_{2}} x_{2} + \frac{1}{R_{4} c_{2}}\nu_{2}\\
		y_{2} = x_{2} ,
	\end{cases}
	\label{eq:rc:network2}
\end{equation}
where $x_2$ is the voltage across the capacitor $C_2$.
\begin{figure}[htpb]
	\centering
	\includegraphics[width=0.25\textwidth]{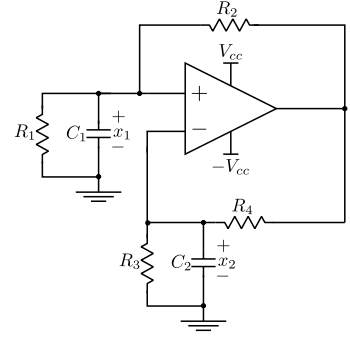}
	\caption{Relaxation Oscillator realized as the mixed positive and negative feedback
	of two RC networks and an op-amp}
	\label{fig:relaxOscillator}
\end{figure}
The interconnection of the op-amp and of the two networks $\Sigma_c$ and $\Sigma_d$
is given by the mixed positive/negative feedback
\begin{equation}
	V_{E} = x_1-x_2, \qquad \nu_{1} = \nu_{2} = V_0 \ .
\end{equation}
which can be written as a standard negative feedback loop using the aggregate output $y = y_1 + y_2$, 
for instance 
\begin{equation}
	V_{E} = - y = -y_1-y_2, \qquad \nu_{1} = \nu_{2} = V_0 \ .
\end{equation}
Taking $a_{1} = \frac{1}{R_{2} C_{1}} $, $a_{2} = \frac{1}{R_{4} C_{2}}$, $b_{1} = \frac{R_{1} + 
R_{2}}{R_{2} R_{1} C_{1}}$ and $b_{2} = \frac{R_{3} + R_{4}}{R_{3} R_{4} C_{2}}$,
the aggregate transfer function from $V_0$ to $y$ reads 
\begin{equation}
G(s)  = -\frac{a_1}{s + b_1} + \frac{a_2}{s + b_2}  = \frac{(a_{2} - a_{1})s + a_{2} b_{1} - a_{1} b_{2}}{(s + b_{1})(s + b_{2})} \ .
\end{equation}
For $a_2 = 0$, there is no negative feedback and the system  reduces to the Schmitt trigger. 
For $a_2 \neq 0$, the negative feedback loop either stabilizes the closed loop system, typically for
parameter values that guarantee $0$-passivity of $G(s)$, or induces oscillations, typically
for parameter values that guarantee $2$-passivity of $G(s)$. 
For instance, $G(j \omega -\gamma)$ has positive real part if 
\begin{align*}
	a_{2}(b_{2} - \gamma) - a_{1} (b_{1} - \gamma) > 0
	\\
	a_{2}(b_{1} - \gamma) - a_{1}(b_{2} - \gamma) > 0
\end{align*}
Hence, by Proposition \ref{proposition:frequency}, if
\begin{displaymath}
0 \!<\! \gamma \!<\! \min \!\left\{ \!b_{1}, b_{2}, \frac{1}{R_{a} C_{a}} \! \right\} ,\ 
	\frac{a_{2}}{a_{1}} \!>\! \max \!\left\{\! \frac{b_{1} - \gamma}{b_{2} - \gamma}, \!
	\frac{b_{2} - \gamma}{b_{1} - \gamma} \!\right\}
\end{displaymath}
then $G(s)$ is strictly $0$-passive with rate $\gamma$. 
The overall closed loop has a globally asymptotically stable fixed point. 
In contrast, for
\begin{displaymath}
0 \!<\! \max \!\left\{ b_{1}, b_{2}\right\} \!<\! \gamma \!<\! \frac{1}{R_{a} C_{a}}  \,,\ 
	\frac{a_{2}}{a_{1}} \!<\! \min \!\left\{\! \frac{b_{1} - \gamma}{b_{2} - \gamma}, \!
	\frac{b_{2} - \gamma}{b_{1} - \gamma} \!\right\}
\end{displaymath}
$G(s)$ is strictly $2$-passive with rate $\gamma$. Thus, by Theorem \ref{theorem:pdissipative},
the overall closed-loop is $2$-dominant with rate $\gamma$. Indeed, $\gamma$ 
divides the fast op-amp dynamics from the dominant two dimensional slow dynamics of the linear networks. 
We conclude the section with a numerical simulation.
The parameters of Table \ref{tab:paramSchmitt} together with 
$R_{3} = 3.3 K \Omega$, $R_{4} = 1 K \Omega$ and $C_{2} = 200 \mu F$
satisfy the conditions above, thus guarantee strict $2$-dominance the closed loop 
with rate $\gamma = 25$. The fixed point in $0$ is unstable but all trajectories remain
bounded (the constraint $x_{a} \in [-E_{2}, E_{1}]$ implies that $x_{1}$ and $x_{2}$ remain bounded),
which enforces oscillations, as shown in Figure \ref{fig:limitCycle}.
\begin{figure}[htpb]
	\centering
	\includegraphics[width=0.35\textwidth]{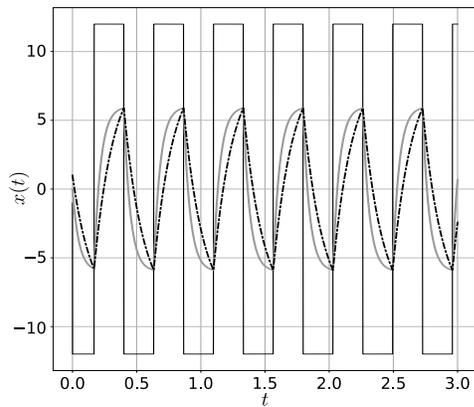}
	\caption{Trajectories of the relaxation oscillator in Figure \ref{fig:relaxOscillator}. $x_{a}(\cdot)$ -- thin/black line; 
	$x_{1}(\cdot)$ -- thick/gray line; $x_{2}(\cdot)$ -- dashed line.}
	\label{fig:limitCycle}
\end{figure}
\section{CONCLUSIONS}
\label{section:conclusions}
We extended the concept of dominance to the analysis of nonsmooth linear
complementarity systems. The extension mimics the smooth case when uniqueness of solutions is assumed.
The approach is based on  the interconnection theory of dissipativity. It opens the way to 
 the analysis of switching or oscillatory circuits with no restriction on the dimension of the state-space.
The potential of the approach was illustrated with a detailed analysis of well-known circuits based
on op-amps, predicting multistable and oscillatory behaviors, while providing margins on the circuit
parameters to enable such behaviors. 
\bibliographystyle{IEEEtran}
\bibliography{Biblio}
\end{document}